\newtheorem{prop}{Proposition}
\newcommand{\mup}{\mu \text{PMU}}
\newcommand{\bs}{\boldsymbol} 
\newcommand{\mb}{\mathbf}
\begin{document}

\newenvironment{packed_enum}{
\begin{enumerate}
  \setlength{\itemsep}{1pt}
  \setlength{\parskip}{0pt}
  \setlength{\parsep}{0pt}
}{\end{enumerate}}

\newenvironment{packed_item}{
\begin{itemize}
  \setlength{\itemsep}{0pt}
  \setlength{\parskip}{0pt}
  \setlength{\parsep}{0pt}
}{\end{itemize}}

\title{Automated Anomaly Detection in Distribution Grids Using $\mup$ Measurements}
\author{Mahdi Jamei\IEEEauthorrefmark{1}, Anna Scaglione\IEEEauthorrefmark{1}, Ciaran Roberts\IEEEauthorrefmark{2},
Emma Stewart\IEEEauthorrefmark{2},\\
Sean Peisert\IEEEauthorrefmark{2},
Chuck McParland\IEEEauthorrefmark{2},
Alex McEachern\IEEEauthorrefmark{3},
\vspace{0.05cm}\\
\IEEEauthorrefmark{1}School of Electrical, Computer, and Energy Engineering, Arizona State University, Tempe, AZ, USA \\
\IEEEauthorrefmark{2}Lawrence Berkeley National Laboratory, Berkeley, CA, USA \\
\IEEEauthorrefmark{3}Power Standards Laboratory, Alameda, CA, USA}
\maketitle

\begin{abstract} 
The impact of Phasor Measurement Units (PMUs) for providing situational awareness to transmission system operators has been widely documented. 
Micro-PMUs ($\mup$s) are an emerging sensing technology that can provide similar benefits to Distribution System Operators (DSOs), enabling a level of visibility into the distribution grid that was previously unattainable. In order to support the deployment of these high resolution sensors, the automation of data analysis and prioritizing communication to the DSO becomes crucial.  
In this paper, we explore the use of $\mup$s to detect anomalies on the distribution grid. Our methodology is motivated by growing concern about failures and attacks to distribution automation equipment. The effectiveness of our approach is demonstrated through both real and simulated data.          
\end{abstract}
\begin{IEEEkeywords}
 \normalfont\bfseries Intrusion Detection, Anomaly Detection, Micro-Phasor Measurement Unit, Distribution Grid 
\end{IEEEkeywords}
\IEEEpeerreviewmaketitle
\section{Introduction}
\label{sec.int}
The state vectors of the transmission grid are closely monitored and their physical behavior is well-understood \cite{abur2004power}. In contrast, Distribution System Operators (DSOs) have historically lacked detailed real-time actionable information about their system. This, however, is set to change. As the distribution grid shifts from a demand serving network towards an interactive grid, there is a growing interest in gaining situational awareness via advanced sensors such as Micro-Phasor Measurement Units ($\mup$s) \cite{scoping_study}. 

The deployment of the $\mup$s in isolation without additional data driven applications and analytics is insufficient. It is critical to equip DSOs with complimentary software tools that are capable of automatically mining these large data sets in search of useful, actionable information. There has been a lot of work focused on using PMU data at the transmission level to improve Wide-Area Monitoring, Protection and Control (WAMPC)\cite{phadke2008wide,terzija2011wide}. The distribution grid, however, is lagging in this respect. Due to inherent differences between operational behavior, such as imbalances and increased variability on the distribution and transmission grid, the algorithms derived for WAMPC at the transmission level are generally not directly applicable at the distribution level. Our work is aimed at addressing this issue.
We focus on an important application of $\mup$ data in the distribution system: anomaly detection, i.e., behavior that differs significantly from normal operation of the grid during (quasi) steady-state. An anomaly can take a number of forms, including faults, misoperations of devices or switching transients, among others, and its root cause can be either a natural occurrence, error or attack. The risk of cyber-physical attacks via an IP network has recently gained significant interest due to the increase in automation of our power gird via two-way communication. This communication is typically carried out on breachable networks that can be manipulated by attackers\cite{slay2008lessons}. Even if an anomaly naturally occurs, it is important to notify the DSO to ensure proper remedial action is taken. 
\subsection{Related Work}
The majority of published work in anomaly detection using sensor data, primarily SCADA and PMU data, has focused on the transmission grid. The proposed methods are typically data-driven approaches, whereby the measurements are inspected for abnormality irrespective of the underlying physical model. One such example, the {\it common path} data mining approach implemented on PMU data and audit logs at a central server, is proposed in \cite{pan2015developing} to classify between a disturbance, an attack via IP computer networks and normal operation. Chen et al., \cite{chen2013dimensionality} derive a linear basis expansion for the PMU data to reduce the dimensionality of the measurements. Through this linear basis expansion, it is shown how an anomaly, which changes the grid operating point, can be spotted by comparing the error of the projected data onto the subspace spanned by the basis and the actual values. Valenzuela et al., \cite{valenzuela2013real} used Principal Component Analysis (PCA) to classify the power flow results into regular and irregular subspaces. Through analyzing the data residing in the irregular subspace, their method determines whether the irregularity is caused by a network attack or not. Jamei et al., \cite{jamei2016micro} propose an intrusion detection architecture that leverages $\mup$ data and SCADA communication over IP networks to detect potentially damaging activities in the grid.
These aforementioned algorithms are all part of the suite of machine learning techniques that the security monitoring architecture will rely on.  
\subsection{Our Contribution}
$\mup$s, due to their high sampling frequency, are a much richer data source in comparison to traditional Distribution Supervisory Control and Data Acquisition  (DSCADA). 
In this paper, we highlight the capability of $\mup$s to detect transient events by proposing a set of rules for anomaly detection. 
The main advantages of our new approach are:\\
1) The underlying physical model of the data forms the basis in deriving the detection method; providing an interpretation of the event that is lacking in a model free approach.\\
2) The distribution grid is modeled allowing unbalanced loading and non-transposed lines. The rules are formulated in such a way that allow for distribution grids with neutral wires, and single-phase or two-phase laterals. \\
3) Quasi steady-state, rather than steady-state, is considered the norm for grid behavior.\\
4) Part of the proposed methodology only requires the phasor data stream of a single $\mup$ and is agnostic of the grid interconnection parameters, while the other part correlates the phasor streams across multiple $\mup$s using electrical properties of the grid. 

The detection method applicable to the measurements of a single $\mup$ is particularly attractive from a security perspective because, assuming that the algorithms are programmed onto the sensor itself, no network communication exchange is needed to obtain results. Therefore, the attacker will have to directly compromise the sensor to alter its response and erase evidence of a physical change.         

In addition to defining these algorithms, we explore their effectiveness in the field via an actual deployment of $\mup$s designed by our partners at Power Standards Lab. These devices output the three phase voltage and current phasors at specific locations on the distribution grid \cite{upmu_site} at a rate of 120 Hz. The proprietary filtering implemented within the device, which differs from the options for the filter given in the C.37.118 standard \cite{c37}, overcomes some of the technical obstacles limiting the deployment of conventional PMUs on the distribution grid. Technical obstacles in real world deployments include the presence of signal noise and smaller voltage angle differences. These devices are also inexpensive, which is a key feature for distribution sensors\cite{von2014micro}. 
We also investigate the sensitivity of our rules with respect to a set of attack scenarios on the $\mup$, and the grid connectivity data.

The remainder of the paper is organized as follows. Section~\ref{sec.model} introduces the $\mup$ data model. Section~\ref{sec:line_mod} presents the $\pi$ model of a distribution line, and the relationship between the voltage and current phasors in quasi steady-state conditions. Section~\ref{sec:rule} forms the body of our work that concerns itself with the formulation of the rules and tracking the anomalies. The effectiveness of the proposed rules, and their sensitivity to partially-compromised data, are tested using real and simulated data in Section~\ref{sec:res}. The conclusion follows in Section~\ref{sec:conc}.          
\section{$\mup$ Data Modeling} 
\label{sec.model}
Assuming normal conditions, $\mup$s are designed to report 120 samples per sec. of the three phase voltage phasors, denoted as $\mb{v}[k]\in \mathbb{C}^{3\times 1}$, and the current phasors, $\mb{i}[k]\in \mathbb{C}^{3\times 1}$, at specific points on the distribution network. 
To help understand the effect of transients on $\mup$ measurements, and the difference between using phasor information compared to higher resolution time domain samples, we review the notion of a phasor, or complex envelope, in this section and tie its formal derivation to the actual implementation in practical $\mup$ sensors (including the C.37.118 standard \cite{c37}). 

The {\it phasor} is, in signal theory, often referred to as the {\it complex envelope}, or the {\it complex baseband equivalent representation}, of an arbitrary signal $s(t)$. In its textbook derivation, it is obtained in two steps. In the first step, the frequency content of the signal at negative frequencies is removed, which leads to a complex signal called an analytic signal, $s^+(t)$. The time domain mapping from $s(t)$ to $s^+(t)$ is as follows\footnote{The operator $\star$ stands for convolution.}:
\begin{equation}
s^+(t)=\frac 1 2 s(t)+\frac j{2 \pi t} \star s(t)
\label{eq.anal}
\end{equation}     
The second term in the sum is the {\it Hilbert transform} of the signal.  
The complex envelope, or phasor, can then be extracted by shifting down the analytic signal in the frequency domain and scaling it or, equivalently, demodulating and scaling the signal in the time-domain\footnote{The reason for the $\sqrt{2}$ scaling is that the signal and its envelope have the same power and energy.}:
\begin{align}\label{eq.env}
\tilde s(t)=\sqrt{2}s^+(t)e^{-j2\pi f_0 t}.
\end{align}  
From \eqref{eq.env} and \eqref{eq.anal} it is easy to see that:
\begin{equation}\label{eq.rec}
s(t)=\sqrt{2}\Re[\tilde s(t)e^{j 2 \pi f_0 t}].
\end{equation}

If power spectral density of the signal $s(t)$ is centered around $f_0$, then the complex representation is 
the {\it smoothest} signal that one can associate to $s(t)$. The mapping is one to one, and therefore, there is no loss of information. Actual $\mup$s do not perform the envelope this way, as explained in the following section. 

\subsection{Practical Implementations of $\mup$s}
The Hilbert transform requires implementing a non-causal filter with infinite impulse response, hence, it is purely theoretical. While there are other ways of approximating the Hilbert filter, the simplest implementation of the $\mup$ is based on the following observations. 
Replacing the operator that takes the real part of the modulated complex envelope in \eqref{eq.rec}, by the scaled summation of the modulated complex phasor and its conjugate and multiplying both sides of equation \eqref{eq.rec} by $\sqrt{2}e^{-j 2 \pi f_0 t}$, we have : 
\begin{align}
\sqrt{2}s(t)e^{-j 2 \pi f_0 t}=\frac{\tilde s(t)+\tilde{s}^*(t)e^{-j 4 \pi f_0 t}}{2}
\label{eq:}
\end{align}
This representation is insightful since we can observe that a low-pass filter, $h(t)$,   
can extract the correct envelope $\tilde s(t)$ from the signal $\sqrt{2}s(t)e^{-j 2 \pi f_0 t}$ if and only if:
\begin{align}
&\tilde s(t) \star h(t)=\tilde s(t) \label{c.1}\\
&\tilde s^*(t)e^{-j 4 \pi f_0 t} \star h(t)=0 \label{c.2}
\end{align}
For the bandpass signal $s(t)$ centered around $f_0$, that has a limited support $[f_0-W_1,f_0+W_2]$, these two conditions are satisfied if:
\begin{align}\label{c.f0W}
W/2 \leq f_0,~~W=W_1+W_2.
\end{align} 
Accordingly, the mapping becomes:
\begin{align}\label{eq.pmuenv}
\tilde s(t)&=\sqrt{2}(s(t)e^{-j 2 \pi f_0 t})\star h(t)
\end{align} 
This means that the phasor can be extracted without obtaining the analytic signal if \eqref{c.1}-\eqref{c.2} hold, and the filter frequency response $H(f)$ is flat within the bandwidth of the signal and isolates its spectrum, emulating an ideal low pass filter. 
\subsection{Information from $\mup$s During a Transient} 
In normal conditions, the mains AC voltage and current are very close approximations of bandpass signals with a narrow support centered around the frequency $f_0=50/60$ Hz. In turn, \eqref{c.f0W} holds in the quasi steady-state scenario and for some types of transients. Because of the normally-small frequency support of the AC signals, the dynamic effects of the electrical wires are not apparent, and their effects can be approximated by a scaling and phase rotation equal to the amplitude and phase of their frequency response at the center frequency. This results in the well-known algebraic equations used in steady-state power systems analysis.
 
During a severe transient, however, the envelope that is obtained through \eqref{eq.env} is not the signal envelope, not just because frequency content is effectively cut by $h(t)$, but also because of the component in \eqref{c.2} that is not zero and the spilling of its tail into the band selected by $h(t)$ distorts the content. In either case, however distorted, the signal that emerges out of the filter $h(t)$ is band-limited, and can therefore be sampled without aliasing at a rate of $2f_0$ Hz. 

Our proposed methodology is to monitor whether the instantaneous $\mup$ measurements of voltage and current phasors belong to a complex hyperplane compatible with the algebraic steady-state version of Ohm's law for a three-phase unbalanced system. 
In the next section, we introduce the general Multi-Input Multi-Output (MIMO) representation of a distribution line and its quasi steady-state representation. These equations form the cornerstone of our anomaly detection rules.
\section{Distribution Line MIMO Modeling}
\label{sec:line_mod}
 
The $\pi$ model of a distribution line that connects bus $i$ to $j$ is shown in Fig.~\ref{fig:dis_line} where $\bs{Y}_{ij}(f)$ denotes the three phase series admittance of the line $(i,j)$ and $\bs{Y}^{sh}_{ij}(f)$ is the three phase shunt admittance matrix of that line.      
\begin{figure}
	\begin{center}
	\includegraphics[width=0.5\textwidth]{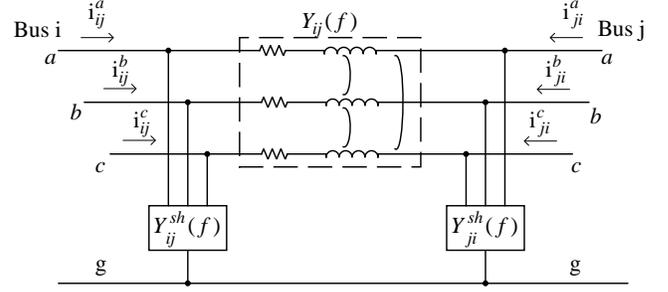}
	\caption{\small $\pi$ Model of the Distribution Line}
	\vspace{-0.2cm}
	\label{fig:dis_line}
	\end{center}
\end{figure}
The modeling of the self and mutual impedance, rather than using the positive sequence representation, is essential for accurately modeling the distribution grid since it does not impose the assumption of a balanced system nor transposition of the lines. In addition, this representation enables us to include three phase, two phase, and single phase lines with/without neutrals by using their $3 \times 3$ phase frame matrix model that is obtained from the modified Carson's equations and Kron reduction \cite{kersting2012distribution}.  

It is well known in linear systems theory that the relationship between the voltage and the current signals in a passive Linear-Time Invariant (LTI) circuit with a known admittance matrix can be represented  as the multiplication in the frequency domain and as a convolution in the time domain, which also takes MIMO form due to the three-phase modeling. The relationship also holds between the complex envelopes of the signals. 

Denoting the baseband model of the line admittance matrices frequency response around the center frequency as $\mathbf{y}_{ij}(f)=\bs{Y}_{ij}^{sh}(f+f_0)H(f)$, $\mathbf{y}^{sh}_{ij}(f)=\bs{Y}^{sh}_{ij}(f+f_0)H(f)$, we have:
\begin{align}
\label{eq:phasor_LTI1f}
{\bs{i}}_{ij}(f)&=(\mathbf{y}^{sh}_{ij}(f)+\mathbf{y}_{ij}(f))\bs{v}_i(f)-\mathbf{y}_{ij}(f)\bs{v}_j(f)\\
{\mathbf{i}}_{ij}(t)&=(\bs{y}^{sh}_{ij}(t)+\bs{y}_{ij}(t)) \ast  \mathbf{v}_i(t)-\bs{y}_{ij}(t) \ast \mathbf{v}_j(t)
\label{eq:phasor_LTI1t}
\end{align}
where $\bs{y}^{sh}_{ij}(t)$ and $\bs{y}_{ij}(t)$ are the time-domain equivalence of the baseband shunt and admittance matrices respectively.            
We will use the following notation for brevity in the remainder:
\begin{align*}
\begin{split}
\overline{\bs{Y}}_{ij}(f)&\triangleq
\bs{Y}_{ij}^{sh}(f)+\bs{Y}_{ij}(f)\\
\overline{\mathbf{y}}_{ij}(f)&\triangleq
\mathbf{y}^{sh}_{ij}(f)+\mathbf{y}_{ij}(f)\\
 \overline{\bs y}_{ij}(t)&\triangleq
 \bs{y}^{sh}_{ij}(t)+\bs{y}_{ij}(t)
 \end{split}
 \end{align*}
Using this defined notation, \eqref{eq:phasor_LTI1f} and \eqref{eq:phasor_LTI1t} can be re-written as:
 \begin{align}
\bs{i}_{ij}(f)&=\overline{\mathbf y}_{ij}(f) \bs{v}_i(f)-\mathbf{y}_{ij}(f) \bs{v}_j(f)\\
 \mathbf{i}_{ij}(t)&= \overline{\bs y}_{ij}(t) \ast \mathbf{v}_i(t)-\bs{y}_{ij}(t) \ast \mathbf{v}_j(t)
\label{eq:phasor_LTI}
\end{align}    
For the discrete time samples of the output of the $\mup$, the counterpart of \eqref{eq:phasor_LTI} is:
\begin{align}
\begin{split}
 \mathbf{i}_{ij}[k]&=\sum_{n=0}^{N-1} \overline{\bs y}_{ij}[n] \mathbf{v}_i[k-n]-\sum_{n=0}^{N-1}\bs{y}_{ij}[n]\mathbf{v}_j[k-n]
\end{split}
\label{eq:basiceq}
\end{align}
where the assumption is that $\bs{y}^{sh}_{ij}[n]$ and $\bs{y}_{ij}[n]$ are the samples of $\bs{y}^{sh}_{ij}(t)$ and $\bs{y}_{ij}(t)$, respectively, and that they are causal and have a finite support of $N$.

In the quasi-steady state condition, the fundamental frequency of the voltage and current signals is always changing, albeit slowly and over a very small range, because of load-generation imbalances, active power demand interactions, large generators inertia, and the automatic speed controllers of the generators \cite{phadke2008synchronized}. 
As a result, the off-nominal frequency affects the phase angle captured by $\mup$s. Fig.~\ref{fig:unwrap} shows the three phase unwrapped angle of the voltage phasor data captured by a $\mup$ installed at our partner utility grid, which clearly shows the grid is working at off-nominal frequency where the frequency drift is not even fixed over time though varies slowly.      
\begin{figure}[h]
\includegraphics[width=0.5\textwidth]{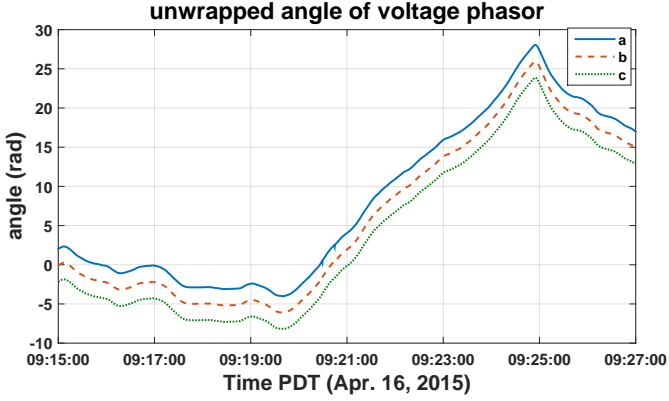}
\caption{Unwrapped Voltage Phasor Angle During Quasi-Steady State}
\label{fig:unwrap}
\end{figure}
Mathematically, it is insightful to decompose the phasor ${\mathbf{v}}_i[k]$ and $\mathbf{i}_{ij}[k]$ as follows:
\begin{align}
\mathbf{v}_i[k]=\hat{\mathbf{v}}_i[k] e^{j \beta_i[k]k}, 
~ \mathbf{i}_{ij}[k]=\hat{\mathbf{i}}_{ij}[k] e^{j \beta_i[k]k}
\label{eq:V_k_I_ij}
\end{align}
where  $\hat{\mathbf{v}}_i[k]$ is the voltage phasor that is captured at nominal frequency, and $\hat{\mathbf{i}}_{ij}[k]$ is the current phasor after removing the exponential term due to $\beta_i[k]$ that captures the time-varying drift in the frequency previously described. We can then write\eqref{eq:basiceq} in the following way:
\begin{align}
\begin{split}
\mathbf{i}_{ij}[k]&=
\sum_{n=0}^{N-1}\overline{\bs{y}}_{ij}[n]\hat{\mathbf{v}}_i[k-n] e^{j\beta_i[k][k-n] (k-n)}\\
&-\sum_{n=0}^{N-1}\bs{y}_{ij}[n]\hat{\mathbf{v}}_j[k-n]e^{j\beta_j[k][k-n] (k-n)}
\end{split}
\end{align}   
The variations of the $\hat{\mathbf{v}}_i[k]$ and $ \beta[k]$ can be approximately neglected over $N$ samples of the discrete time convolution in \eqref{eq:basiceq}, i.e., $\hat{\mathbf{v}}_i[k-n]\approx \hat{\mathbf{v}}_i[k]$ and $\beta_i[k-n]\approx \beta_i[k]$
for $n=0,\ldots,N-1$. Followed by this approximation, we can write:   
\begin{align}
\begin{split}
\mathbf{i}_{ij}[k]
&\approx 
\sum_{n=0}^{N-1}\overline{\bs{y}}_{ij}[n]\hat{\mathbf{v}}_i[k]e^{j\beta_i[k](k-n)} \\
&-\sum_{n=0}^{N-1}\bs{y}_{ij}[n]\hat{\mathbf{v}}_j[k]e^{j\beta_j[k] (k-n)} \\
&=\left(\sum_{n=0}^{N-1}\overline{\bs{y}}_{ij}[n]e^{-j\beta_i[k] n}\right)\mathbf{v}_i[k] \\
&-\left(\sum_{n=0}^{N-1}\bs{y}_{ij}[n]e^{-j\beta_j[k] n}\right) \mathbf{v}_j[k]\label{eq.i[k]1}
\end{split}
\end{align}         
Considering the following relationship:
\begin{equation}
\bs{Y}_{ij}(f+f_0)H(f)=rect(Tf)\sum_{n=0}^{N-1} T\bs y_{ij}[n] e^{-j2\pi nT f}
\end{equation}
where $T=1/120$ sec. is the sampling interval of the $\mup$, we can introduce the following two matrices:
\begin{align}
\begin{split}
\overline{\bs Y}_{ij}(f_0,k)&\triangleq \frac{1}{T} \overline{\bs{Y}}_{ij}\!\!\left(f_0+\frac{\beta_i[k]}{2\pi T}\right)H\!\left(\frac{\beta_i[k]}{2\pi T}\right),\\
{\bs{Y}}_{ij}(f_0,k)&\triangleq \frac{1}{T} \bs{Y}_{ij}\!\!\left(f_0+\frac{\beta_j[k]}{2\pi T}\right)H\!\left(\frac{\beta_j[k]}{2\pi T}\right),
\end{split}
\label{eq:modulated_Y}
\end{align}
and therefore we can write \eqref{eq.i[k]1} as follows:
\begin{align}
\mathbf{i}_{ij}[k] &=\overline{\bs Y}_{ij}(f_0,k)\mathbf{v}_i[k] -{\bs{Y}}_{ij}(f_0,k) \mathbf{v}_j[k]\label{eq.i[k]}
\end{align} 

Equation \eqref{eq.i[k]} is  Ohm's law in the phasor domain under the quasi-steady state and comprises part of our anomaly detection algorithm derived in the next section. The corresponding equation for the steady state can be simply obtained by setting $\beta[k]=0$ in \eqref{eq.i[k]}.  

From the analysis above, and specifically equation \eqref{eq.i[k]}, it is clear that the effect of the quasi-steady state in the phasor domain is the modulation of the admittances \eqref{eq:modulated_Y}. The effect is usually modest, as $\beta[k]$ is small. However, during a severe transient with frequency support in the order of 10 Hz, the relationship \eqref{eq.i[k]} with the matrices in \eqref{eq:modulated_Y} does not hold anymore, and an indication of transients in the phasor domain is, in reality, a manifestation of the full dynamic behavior in \eqref{eq:basiceq}.

\section{Anomaly Detection}
\label{sec:rule}
When the power grid is no longer in quasi-steady state conditions, the relationship between voltage and the current phasors manifests its full dynamic behavior. What we propose is to inspect the validity of the memoryless algebraic equations to flag the existence of transients in the grid. 
Equation \eqref{eq.i[k]} provides the basis for our rule.
\subsection{Single $\mup$ Metric}
Considering the line in Fig.~\ref{fig:dis_line}, we first assume that two $\mup$s are installed at both ends of a line, i.e., bus $i$ and $j$, which means $\mb{i}_{ij}[k]$, $\mb{i}_{ji}[k]$, $\mb{v}_i[k]$ and $\mb{v}_j[k]$ are all available and they can exchange their information locally. This case helps explaining the method that follows, which makes use of data from a single $\mup$.

We can cast the two equations that hold 
between the voltage and current at two ends of the three-phase line as follows:
\begin{align}
\begin{split}
\underbrace{
\begin{pmatrix}
\mb{i}_{ij}[k]\\
\mb{i}_{ji}[k]
\end{pmatrix}}&=
\underbrace{\begin{pmatrix}
\overline{\bs Y}_{ij}(f_0,k)&-{\bs Y}_{ij}(f_0,k)\\
-{\bs Y}_{ij}(f_0,k)&\overline{\bs Y}_{ij}(f_0,k)
\end{pmatrix}}
\underbrace{
\begin{pmatrix}
\mb{v}_{i}[k]\\
\mb{v}_{j}[k]
\end{pmatrix}}\\
\mb{I}_{ij}[k]&~~~~~~~~~~~~~~~~\mb{H}_{ij}(f_0,k)~~~~~~~~~~~~~~\mb{V}_{ij}[k]
\end{split}
\label{eq.2mups}
\end{align}
Let us define, with $M\geq 6$, the following sample correlation matrices:
 \begin{align}
 \bs R_{IV}[k]&=\frac{1}{M-1}\sum_{m=0}^{M-1} \mb{I}_{ij}[k-m]\mb{V}^H_{ij}[k-m],\\
  \bs R_{VV}[k]&=\frac{1}{M-1} \sum_{m=0}^{M-1} \mb{V}_{ij}[k-m]\mb{V}^H_{ij}[k-m].
 \end{align}
Assuming that variations of $\mb{H}_{ij}(f_0,k)$ is negligible over a window of $M$ samples, equation \eqref{eq.2mups} implies that the following homogeneous equation holds in quasi-steady:
 \begin{align}\label{eq.homog} 
\left( \begin{array}{c:c}
\mathcal{I}_6 &-\mb{H}_{ij}(f_0,k)
\end{array}\right)
\underbrace{
    \begin{pmatrix}
  \bs R_{ IV}[k]\\
    \bs R_{ VV}[k]
 \end{pmatrix}}_{\bs R_k}
 =\mb 0
  \end{align}   
 \begin{prop}
 \label{prop.Rk2}
Correlation matrix $\bs R_k$ is approximately rank-1 during the quasi-steady state. 
 \end{prop}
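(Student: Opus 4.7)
The plan is to exploit the quasi-steady-state decomposition \eqref{eq:V_k_I_ij} to argue that, over the $M$-sample window, both $\mb V_{ij}[k-m]$ and $\mb I_{ij}[k-m]$ behave as a single slowly-modulated complex vector, so that the two sample correlation matrices collapse into rank-one outer products and stacking preserves that rank.

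First I would write, for each $m=0,\dots,M-1$,
\[
\mb V_{ij}[k-m] = \hat{\mb V}_{ij}[k-m]\,e^{j\beta[k-m](k-m)},
\]
\[
\mb I_{ij}[k-m] = \hat{\mb I}_{ij}[k-m]\,e^{j\beta[k-m](k-m)},
\]
where, consistently with the quasi-steady hypothesis, the slow frequency drift $\beta[k]$ is taken to be essentially common to both ends of the line. Invoking the same locally-constant approximation used to pass from \eqref{eq:basiceq} to \eqref{eq.i[k]1}, I would then replace $\hat{\mb V}_{ij}[k-m]\approx \hat{\mb V}_{ij}[k]$, $\hat{\mb I}_{ij}[k-m]\approx \hat{\mb I}_{ij}[k]$ and $\beta[k-m]\approx\beta[k]$ throughout the window.

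Second, when forming each outer product the drift phase cancels since $e^{j\beta[k](k-m)}e^{-j\beta[k](k-m)}=1$, leaving
\[
\mb V_{ij}[k-m]\mb V_{ij}^H[k-m] \approx \hat{\mb V}_{ij}[k]\hat{\mb V}_{ij}^H[k],
\]
\[
\mb I_{ij}[k-m]\mb V_{ij}^H[k-m] \approx \hat{\mb I}_{ij}[k]\hat{\mb V}_{ij}^H[k].
\]
Averaging over $m$ and stacking the two blocks then gives
\[
\bs R_k \approx \begin{pmatrix} \hat{\mb I}_{ij}[k] \\ \hat{\mb V}_{ij}[k] \end{pmatrix}\hat{\mb V}_{ij}^H[k],
\]
which is a $12\times 6$ matrix written as the outer product of a column and a row, hence exactly of rank one. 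As a consistency check, left-multiplying this factorization by $\bigl(\mathcal{I}_6\;\;-\mb H_{ij}(f_0,k)\bigr)$ reduces the claim to the steady-state Ohm's law $\hat{\mb I}_{ij}[k]=\mb H_{ij}(f_0,k)\hat{\mb V}_{ij}[k]$, which is exactly what \eqref{eq.homog} asserts.

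The main obstacle is not algebraic but is embedded in justifying the ``slow variation over $M$ samples'' approximation. The window length $M$ must be short enough that both $\hat{\mb V}_{ij}[k]$ and $\beta[k]$ are effectively constant, yet at least $6$ so that the $6\times 6$ blocks can generically be full rank in the absence of quasi-steady structure, which is what makes the rank-one conclusion non-trivial. This time-scale separation is the same one that already underlies the derivation of \eqref{eq.i[k]}, so the proof inherits rather than removes that assumption. In addition, any mismatch $\beta_i[k]\neq \beta_j[k]$ between the two ends of the line leaves a residual modulation $e^{j(\beta_i[k]-\beta_j[k])(k-m)}$ inside the cross terms of the outer products which is not annihilated by the averaging; this small residual, together with the windowed averaging errors on $\hat{\mb V}_{ij}$ and $\hat{\mb I}_{ij}$, is precisely the source of the word ``approximately'' in the statement.
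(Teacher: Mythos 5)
Your proof is correct and rests on the same two pillars as the paper's: the locally-constant approximation of $\hat{\mb{v}}$ and $\beta$ over the $M$-sample window, and the observation that the common drift phase $e^{j\beta[k](k-m)}$ cancels inside each Hermitian outer product. The execution differs, though, in a way worth noting. You factorize $\bs R_k$ directly as the single outer product $\bigl(\hat{\mb I}_{ij}[k];\,\hat{\mb V}_{ij}[k]\bigr)\hat{\mb V}_{ij}^H[k]$, which makes the rank-one claim immediate and transparent; the paper instead first proves $\mathrm{rank}(\bs R_{VV}[k])=1$ via a Kronecker-product identity involving an explicit drift matrix $\mb E[k]$ with $\mb E[k]\mb E^H[k]=M\mb 1_{6\times 6}$, and then transfers the rank bound to the stacked matrix through $\bs R_k^H\bs R_k=\bs R_{VV}^H[k]\bigl(\mb H_{ij}^H(f_0,k)\mb H_{ij}(f_0,k)+\mathcal I\bigr)\bs R_{VV}[k]$ together with $\bs R_{IV}[k]=\mb H_{ij}(f_0,k)\bs R_{VV}[k]$. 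Your route is more elementary and avoids that detour, at the small cost of \emph{assuming} $\hat{\mb I}_{ij}[k-m]\approx\hat{\mb I}_{ij}[k]$ rather than deriving the current block from the voltage block via \eqref{eq.homog} (your consistency check at the end effectively closes that gap). Your closing remarks on the residual $e^{j(\beta_i[k]-\beta_j[k])(k-m)}$ cross-term are a point the paper handles only by the blanket assumption $\beta_i[k]\approx\beta_j[k]$, so your discussion of where the ``approximately'' comes from is, if anything, slightly more careful than the original.
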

\begin{proof}
During the quasi-steady state along a distribution line, the following assumptions hold with a very good approximation for $n=0,1,...,M-1$:
\begin{align}
\begin{split}
\hat{\mathbf{v}}_i[k-n]&\approx \hat{\mathbf{v}}_i[k],~~~\hat{\mathbf{v}}_j[k-n] \approx \hat{\mathbf{v}}_j[k]\\
\beta_i[k-n]&\approx \beta_i[k],~~~\beta_j[k-n]\approx \beta_j[k],\\
\beta_i[k] & \approx \beta_j[k]
\end{split}
\end{align}  
Therefore, we can write:
\begin{align}
\begin{split}
\bs R_{VV}[k]&=\frac{1}{M-1}(\mathbf{V}_{ij}[k] \otimes \mb{E}[k])(\mathbf{V}^H_{ij}[k]\otimes \mb{E}^H[k])\\
&=\frac{1}{M-1}(\mathbf{V}_{ij}[k]\mathbf{V}^H_{ij}[k])\otimes(\mb{E}[k]\mb{E}^H[k])
\end{split}
\end{align}
where $\mb{E}[k]$ is defined as follows and represents the variations due to the off-nominal frequency:
\begin{align}
\mb{E}[k]=\mb{1}_{6\times 1}\otimes\begin{pmatrix}
e^{-j\beta_i[k](M-1)}&\ldots&e^{-j\beta_i[k]}&1\\
\end{pmatrix}
\end{align}
We can then write:
\begin{align}
\mb{E}[k]\mb{E}^H[k]=(\mb{1}_{6\times 1}\mb{1}_{1\times 6})\otimes(M)=M\mb{1}_{6\times 6}
\end{align}
and therefore:
\begin{align}
\bs R_{VV}[k]=\frac{M}{M-1}(\mathbf{V}_{ij}[k]\mathbf{V}^H_{ij}[k])\otimes(\mb{1}_{6 \times 6})
\end{align}
which accordingly means that:
\begin{align*}
rank(\bs R_{VV}[k])=rank(\mathbf{V}_{ij}[k]\mathbf{V}^H_{ij}[k])\times rank(\mb{1}_{6 \times 6})=1 
\end{align*}
Because:
\begin{align*}
rank(\bs R_k)=rank(\bs R_k^H\bs R_k)
\end{align*}
We analyze the rank of $\bs R_k^H\bs R_k$ here, where:
\begin{align}
\bs R_k^H\bs R_k=\bs R_{IV}^H[k]\bs R_{IV}[k]+\bs R_{VV}^H[k]\bs R_{VV}[k]
\label{eq.RhR2}
\end{align}
From the structure of \eqref{eq.homog} during the quasi-steady state, we have:
\begin{align}
\begin{split}
\bs R_{IV}[k]&=\mb{H}_{ij}(f_0,k)\bs R_{VV}[k]
\end{split}
\label{eq.Htilde}
\end{align}
Substituting \eqref{eq.Htilde} in \eqref{eq.RhR2}, we have:
\begin{align}
\bs R_k^H\bs R_k=\bs R^H_{VV}[k]\bs{\mathcal{G}}_{ij}(f_0,k)\bs R_{VV}[k]
\end{align}
where:
\begin{align*}
\bs{\mathcal{G}}_{ij}(f_0,k)=\mb{H}^H_{ij}(f_0,k)\mb{H}_{ij}(f_0,k)+\mathcal{I} 
\end{align*}
Since the linear transformation of $\bs R_{VV}[k]$ does not increase its rank, and since we have already shown that $\bs R_{VV}[k]$ is of rank-1 during the quasi-steady state, we can conclude that:
\begin{align}
\begin{split}
&rank(\bs R_k)=rank(\bs R_k^H\bs R_k)=\\
&\leq rank(\bs R_{VV}[k])=1 \rightarrow \\
 &rank(\bs R_k)=1
\end{split}
\end{align}  
\end{proof}
Therefore, we can write $\bs R_k$ as follows:    
\begin{eqnarray}
\bs R_k\approx \sigma_1[k]\bs u_1[k]\bs \nu^H_1[k]\rightarrow\bs R_k\bs R^H_k\approx \sigma^2_1[k]\bs u_1[k]\bs u^H_1[k],
\end{eqnarray}
which means that all columns of $\bs R_k$ must lie in the same hyperplane. 
Overall, deviation from this behavior is an indicator that the line is experiencing a transient and/or that the three-phase measurements no longer lie over a single principal component. 
The detection can be automated by computing the following cost  
and tracking the fast changes in $x[k]$ \footnote{$||.||_{F}$ denotes the Frobenious norm.}:
\begin{equation}
\begin{split}
x[k]&=\min_{\bs u} {||(\mathcal{I}_{12}-\bs u\bs u^H)\bs R_k\bs R_k^H||_{F}} ~~\mbox{s.t.}~~||\bs u||=1.
\end{split}
\label{eq:cost_two_upmu}
\end{equation}
The solution $\bs u$  is the principal subspace of $\bs R_k$ and it is normally obtained by minimizing the 
orthogonal projection with respect to $\bs u$ that is expected to ideally go to zero in the stationary balanced case, and be close to zero for stationary unbalanced case. 

Assume now that only a single $\mup$ is available at one end of a line. Using some reasonable approximations, it is still possible to apply this rule using the data stream from a single $\mup$. 
For bus $i$ with a $\mup$, the two voltage vectors at the two ends of each incident line to that bus are such that:
\begin{align}
\mb{v}_j[k]= \underset{\tilde{\bs{\alpha}}[k]}{\underbrace{\text{diag}(\bs{\alpha}[k])}} \mb{v}_i[k],
\end{align}
where $\bs{\alpha}[k]$ is a complex vector that relates the voltage phasors at the two ends. If we define now:
 \begin{align}
 \bs R^{(ij)}_{\rm iv}[k]&=\frac{1}{M-1}\sum_{m=0}^{M-1} \mb{i}_{ij}[k-m]\mb{v}^H_{i}[k-m],\\
  \bs R^{(ji)}_{\rm vv}[k]&=\frac{1}{M-1}\sum_{m=0}^{M-1} \mb{v}_{j}[k-m]\mb{v}^H_{i}[k-m].
 \end{align}
Assuming that $\bs{\alpha}[k]$ remains constant over a window of $M$ samples during the quasi-steady state, we can write:
 \begin{align}
\bs R^{(ji)}_{\rm vv}[k]\approx \tilde{\bs{\alpha}}[k] \bs R^{(ii)}_{\rm vv}[k]
 \end{align}
Assuming that the variation of $\overline{\bs Y}_{ij}(f_0,k)$ is negligible over $M$ samples during the quasi-steady state, we can use \eqref{eq.i[k]} to write:
\begin{align}
\left( \begin{array}{c:c}
\mathcal{I}_3 &
-\overline{\bs Y}_{ij}(f_0,k)+{\bs Y}_{ij}(f_0,k)\tilde{\bs{\alpha}}[k] 
\end{array}\right)
\underbrace{
 \begin{pmatrix}
  \bs R^{(ij)}_{\rm iv}[k]\\
    \bs R^{(ii)}_{\rm vv}[k]
 \end{pmatrix}
 }_{\bs R^{(i)}_k}
\approx\bs 0
\label{eq.homog1}
\end{align}
\begin{prop}
\label{prop.Rk1}
Correlation matrix $\bs R^{(i)}_k$ is approximately rank-1 during the quasi-steady state.
\end{prop}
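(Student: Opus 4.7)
The plan is to mirror the structure of the proof of Proposition~\ref{prop.Rk2}, exploiting the fact that under the same quasi-steady state assumptions already invoked, the voltage phasor stream at a single bus collapses to a rank-one outer product once we factor out the common off-nominal frequency modulation. Since the homogeneous relation \eqref{eq.homog1} now plays exactly the role that \eqref{eq.homog} played in the two-$\mup$ case, it should propagate the rank bound from $\bs R^{(ii)}_{\rm vv}[k]$ to the stacked matrix $\bs R^{(i)}_k$.

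Concretely, I would first use the decomposition $\mb{v}_i[k-m]=\hat{\mb{v}}_i[k-m]e^{j\beta_i[k-m](k-m)}$ together with the quasi-steady approximations $\hat{\mb{v}}_i[k-m]\approx \hat{\mb{v}}_i[k]$ and $\beta_i[k-m]\approx\beta_i[k]$ for $m=0,\ldots,M-1$, to write
\begin{align*}
\bs R^{(ii)}_{\rm vv}[k]\approx \frac{1}{M-1}\bigl(\mb{v}_i[k]\otimes \mb{e}[k]\bigr)\bigl(\mb{v}_i[k]\otimes \mb{e}[k]\bigr)^H,
\end{align*}
where $\mb{e}[k]$ is the phase-modulation vector with entries $e^{-j\beta_i[k]m}$. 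Exactly as in the previous proof, the Kronecker identity then reduces this to $\tfrac{M}{M-1}(\mb{v}_i[k]\mb{v}_i^H[k])\otimes \mb{1}_{3\times 3}$, which is of rank one. Next I would introduce $\bs\Gamma[k]\triangleq -\overline{\bs Y}_{ij}(f_0,k)+\bs Y_{ij}(f_0,k)\tilde{\bs\alpha}[k]$ and, from \eqref{eq.homog1}, read off $\bs R^{(ij)}_{\rm iv}[k]\approx -\bs\Gamma[k]\bs R^{(ii)}_{\rm vv}[k]$. Substituting this into
\begin{align*}
\bs R^{(i)H}_k \bs R^{(i)}_k = \bs R^{(ij)H}_{\rm iv}[k]\bs R^{(ij)}_{\rm iv}[k]+\bs R^{(ii)H}_{\rm vv}[k]\bs R^{(ii)}_{\rm vv}[k]
\end{align*}
yields $\bs R^{(ii)H}_{\rm vv}[k](\bs\Gamma^H[k]\bs\Gamma[k]+\mathcal{I}_3)\bs R^{(ii)}_{\rm vv}[k]$, a linear transformation of the rank-one $\bs R^{(ii)}_{\rm vv}[k]$ on both sides. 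Since such transformations cannot increase rank, and since $\mr{rank}(\bs R^{(i)}_k)=\mr{rank}(\bs R^{(i)H}_k\bs R^{(i)}_k)$, the bound $\mr{rank}(\bs R^{(i)}_k)\le 1$ follows, and equality holds generically.

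The main obstacle, and the reason the statement reads ``approximately'' rank-one, is the quality of the intermediate approximation $\bs R^{(ji)}_{\rm vv}[k]\approx \tilde{\bs\alpha}[k]\bs R^{(ii)}_{\rm vv}[k]$, which silently requires that the voltage ratio $\bs\alpha[k]$ stays constant over the $M$-sample window even though $\bs v_j[k]$ is never actually measured. I would therefore want to argue explicitly that within the quasi-steady state, the near-constancy of $\hat{\mb{v}}_i[k]$, $\hat{\mb{v}}_j[k]$ and the common drift $\beta_i[k]\approx\beta_j[k]$ forces $\bs\alpha[k]$ to vary on a timescale much longer than $M/120$ seconds, so that the residual in \eqref{eq.homog1} is of higher order than the dominant singular component. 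Any remaining imbalance or mild line asymmetry only perturbs this leading rank-one term, justifying the ``approximately rank-one'' conclusion rather than an exact one.
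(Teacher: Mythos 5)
Your argument is exactly the paper's: the authors prove this proposition by noting it is ``very similar'' to Proposition~\ref{prop.Rk2}, invoking the same approximations $\hat{\mb{v}}_i[k-n]\approx\hat{\mb{v}}_i[k]$, $\beta_i[k-n]\approx\beta_i[k]$, $\bs\alpha[k-n]\approx\bs\alpha[k]$ and the structure of \eqref{eq.homog1}, which is precisely the chain you spell out (rank-one $\bs R^{(ii)}_{\rm vv}[k]$ after factoring the common modulation, then propagating the bound through $\bs R^{(i)H}_k\bs R^{(i)}_k$). Your closing remarks on the constancy of $\bs\alpha[k]$ over the $M$-sample window make explicit an assumption the paper simply states, so the proposal is a correct, fleshed-out version of the paper's own proof.
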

\begin{proof}
The proof is very similar to that of Proposition~\ref{prop.Rk2}, and follows by assuming that $\hat{\mathbf{v}}_i[k-n]\approx \hat{\mathbf{v}}_i[k]$, $\beta_i[k-n]\approx \beta_i[k]$, and $\bs{\alpha}[k-n] \approx \bs{\alpha}[k]$ for $n=0,1,...,M-1$, as well as using the structure of \eqref{eq.homog1}.  
\end{proof}
Proposition~\ref{prop.Rk1} suggests a similar criterion for a single $\mup$ to flag the exit from a quasi-steady state regime, and can be achieved by tracking the fast changes in $x[k]$ defined as follows for each individual incident line to that bus:
\begin{equation}
x[k]=\min_{\bs u}{||(\mathcal{I}_{6}-\bs u\bs u^H)\bs R^{(i)}_k(\bs R^{(i)}_k)^H||_{F}} ~~\mbox{s.t.}~~||\bs u||=1
\label{eq:cost_one_upmu}
\end{equation}
\subsection{Multiple $\mup$s Metric} 
\label{sec:multiple}
In this section, we correlate the phasor data across multiple $\mup$s to detect anomalies in the grid. The applied rule here extends the test of the quasi-steady state equations validity applying it to multiple  $\mup$ measurements scattered over the grid. The rule can be hosted in the Distribution Management System (DMS), where the data from all the $\mup$s could be available, or it can be decentralized over a set of anomaly detection engines, where each agent is responsible to check the anomalies on a dedicated part of the grid and sharing the edge information with the other agents. 

We assume that knowledge of $\bs{Y}^{sh}_{ij}(f_0,0)$ and $\bs{Y}_{ij}(f_0,0)$ for each line in the perimeter monitored by that detector engine is available. We can take advantage of the fact that $\beta_k$ is small and consider their difference from \eqref{eq:modulated_Y} as a perturbation, which is equivalent to noise in the observation model. For brevity, when introducing the rule in this part, we will use $\bs{Y}^{sh}_{ij}$ and $\bs{Y}_{ij}$ to refer to
$\bs{Y}^{sh}_{ij}(f_0,0)$ and $\bs{Y}_{ij}(f_0,0)$.

A natural way to relate the measurements across multiple devices is through the grid interconnection.  
We represent the vector of three-phase current injection and bus voltage phasors in the whole grid by $\mathbf{I}[k]$ and $\mathbf{V}[k]$, respectively, each vector contains $3B$ elements where $B$ is the number of buses. We also define the vector $\mathbf{d}$ as follows:
\begin{align}
\mathbf{d}[k]=\begin{pmatrix}
\mathbf{I}[k] \\ \mathbf{V}[k]
\end{pmatrix}
\end{align}
The following set of algebraic equations are homogeneous during the steady-state and should be close to homogeneity during the quasi-steady state:
\begin{align}
\mathbf{H}\mathbf{d}[k]
=\mathbf{0},~~~ 
\mathbf{H}&=\left( \begin{array}{c:c}
\mathcal{I}_{3B}  &-\bs{Y}_{3(B \times B)}
\end{array} \right)
\label{eq:grid_homogen}
\end{align} 
where $\bs{Y}$ is the admittance matrix of the grid that connects the current injection to the bus voltages, and is constructed from the $3 \times 3$ line shunt and series admittance matrices introduced in Section~\ref{sec:line_mod}.

It should be noted that the elements of $\mathbf{d}$ in \eqref{eq:grid_homogen} are not all independent variables. The challenge is that, in general, we will have a very limited number of measurements of $\mathbf{d}$ from $\mup$s, due to the cost limitations of deploying these devices. 
Let $K$ denote the number of $\mup$s that are available. We assume that each $\mup$ device has enough channels to measure the voltage and all incident current measurements of the bus on which it is installed. Hence, having a $\mup$ at bus $i$ means that the following three phase voltage phasors and three phase current injection phasors for that bus are available:
\begin{align}
\begin{split}
[\mb{V}[k]]_i=\mb{v}_i[k],~~
[\mb{I}[k]]_i&=\sum_{j:i \sim j} \mb{i}_{ij}[k]
\end{split}
\end{align}
where $i \sim j$ denotes that bus $i$ and $j$ are connected. We can define a permutation matrix ${\bf T}$
that parses the vector $\mathbf d[k]$ into two sub-vectors corresponding to the non-available measurements, $\mathbf{d}_u[k]$, and the available measurements, $\mathbf{d}_a[k]$, that is:
\begin{equation}
{\mathbf T}=
\begin{pmatrix}
{\mathbf T}_u\\
{\mathbf T}_a
\end{pmatrix}
~\rightarrow~ 
{\mathbf T}\mathbf d=\begin{pmatrix}
{\mathbf d}_u\\
{\mathbf d}_a
\end{pmatrix},
~\mathbf{H}{\mathbf T}^T=
\left(\!\!
\begin{array}{c:c}
{\mathbf H}_u&{\mathbf H}_a
\end{array}\!\!
\right)
\end{equation}
where
\begin{align*}
 \mb{T}_u &\in \{0,1\}^{6(K' \times B)},~K'=B-K\\
 \mb{T}_a &\in \{0,1\}^{6(K \times B)}
\end{align*}
Since ${\mathbf T}^T{\mathbf T}={\cal I}$, we can rewrite \eqref{eq:grid_homogen} in the following form:
\begin{align}
\mathbf{H}_u\mathbf{d}_u[k]+\mathbf{H}_a\mathbf{d}_a[k]=\mathbf{0}.
\label{eq:split_grid_homogen_1}
\end{align}  
Even though the equation is not exactly homogeneous, primarily due to the frequency drift discussed in Section \ref{sec:line_mod}, it suggests that an estimate of $\mb{d}_u$ can be found through the following minimization:
\begin{align}
\hat{x}[k]=\underset{\mathbf{d}_u}{\min}~ {{||\mathbf{H}_u\mathbf{d}_u+\mathbf{H}_a\mathbf{d}_a[k]||_2^2}}.
\label{eq:opt}
\end{align}
This is a least-square problem, with the well known solution:
\begin{align}
\mathbf{d}^\text{opt}_u[k]=-\mathbf{H}_u^\dagger\mathbf{H}_a\mathbf{d}_a[k]
\end{align}
where $(.)^\dagger$ denotes the pseudo-inverse operator. Consequently:
\begin{align}\label{xopt1}
\hat{x}[k]=||(\mathcal{I}-\mathbf{H}_u\mathbf{H}_u^\dagger)\mathbf{H}_a\mathbf{d}_a[k]||_2^2,
\end{align}
In fact, $\hat{x}[k]$ can be interpreted as follows. If we pre-multiply both sides of \eqref{eq:split_grid_homogen_1} by the orthogonal projector onto the left null-space of $\mb{H}_u$ i.e., $(\mathcal{I}-\mathbf{H}_u\mathbf{H}_u^\dagger)$, by definition, for any $\mathbf{d}_u[k]$, we have:
\begin{equation}
(\mathcal{I}-\mathbf{H}_u\mathbf{H}_u^\dagger)\mathbf{H}_u\mathbf{d}_u[k]=\mathbf{0}
\end{equation}
and therefore, if \eqref{eq:split_grid_homogen_1} holds, it must also hold that:
\begin{equation}\label{eq.null}
(\mathcal{I}-\mathbf{H}_u\mathbf{H}_u^\dagger)\mathbf{H}_a\mathbf{d}_a[k]=\mathbf{0}
\end{equation}
which explains why the cost in \eqref{xopt1} should be small when the system is in quasi-steady state. What we propose is to track the fast changes of the normalized $\hat{x}[k]$ defined as follows to detect anomalies in the data.
\begin{align}\label{xopt}
x[k]=\frac{||(\mathcal{I}-\mathbf{H}_u\mathbf{H}_u^\dagger)\mathbf{H}_a\mathbf{d}_a[k]||_2^2}{||\mb{d}_a||_2^2}
\end{align} 
However, \eqref{eq.null} becomes trivial if $(\mathcal{I}-\mathbf{H}_u\mathbf{H}_u^\dagger)=0$. This is the case when $\mb{H}_u$ is a full rank square matrix or a fat matrix, i.e., $K < \frac{B}{2}$, which has full row rank. This is actually the most common case because the number of $\mup$s is going to be generally very small relative to the grid size. However, we can rely on the fact that the matrix $\mb{H}_u\mb{H}^H_u$ has a high condition number, due to the weak connectivity of the radial or weakly meshed networks, and relative homogeneity of the line parameters. Considering the singular value decomposition of the matrix $\mb{H}_u$:
\begin{align}
\mb{H}_u=\mb{U}_u \mb{S}_u \mb{V}_u^H
\end{align}
We define $\mb{u}_{u,2}$ to denote the last column of $\mb{U}_u$, representing the left singular vector that corresponds to the smallest singular value of the matrix $\mb{H}_u$. If \eqref{eq:split_grid_homogen_1} holds, we can expect that the $x[k]$ defined as follows should be small with smooth variations during the quasi-steady state:
\begin{align}
x[k]=\frac{||\mb{u}^H_{u,2}\mb{H}_a\mb{d}_a||_2^2}{||\mb{d}_a||_2^2}
\label{xopt_underdet}
\end{align} 

Therefore, we propose $x[k]$ as the quantity to track its sudden changes in order to detect that a transient is present and apparent through the measurements $\mathbf{d}_a[k]$ or equivalently to spot when \eqref{eq:split_grid_homogen_1} no longer holds. 
\subsection{Fast Change Detection Method}
\label{sec:change_det}
The quantities defined in the last section for our metrics are tracked for fast changes, since their sudden variations are signatures of an anomaly. The quantities consist of the optimum cost functions defined in \eqref{eq:cost_two_upmu} and \eqref{eq:cost_one_upmu} for the steady-state criterion using the data from two adjacent $\mup$s and a single $\mup$, respectively, and finally, the value defined in \eqref{xopt_underdet} using the data from multiple $\mup$s. 

Using real and simulated data, we have confirmed that changes in the mean value during the quasi steady-state regime are extremely smooth while deviations from this mean value are minimal. This observation motivated us to consider changes in their mean value as the common statistical trade-mark of anomalies in all of these quantities. 

To achieve fast detection of such changes, we propose to apply the sequential two-sided Cumulative Sum (CUSUM) algorithm \cite{page1954continuous,basseville1993detection,lai1995sequential}. To cast our problem in the CUSUM frame, we approximate the samples $x[k]$ for all the aforementioned quantities as outcomes of a Gaussian non-zero mean process ${X}[k]$:  
\begin{align}
{X}[k]={\mu}_x[k]+{\omega}_x[k]
\end{align}
where ${\mu}_x[k]$ is the mean of the process, and ${\omega}_x[k] \sim \mathcal{N}({0},\sigma^2_x[k])$ is zero-mean, additive Gaussian noise in the measurements with variance $\sigma^2_x[k]$. We assume that ${\omega}_x[k]$ are independent random variables for each $k$. Our intention is to detect sudden changes in the mean of the process, ${\mu}_x[k]$. Although there is temporal correlation among the observations, our objective, i.e., fast change detection in mean, justifies the relaxation that the random process has independent observation samples.   

The algorithm decides between two possible hypotheses at time $k$:
 $\mathcal{H}_0$: no change is detected in the mean,
 $\mathcal{H}_1$: change is detected in the mean.
The decision in the CUSUM algorithm is based on two \textit{``instantaneous log-likelihood ratios''}, corresponding to upward and downward change of the mean, defined as follows:
\begin{subequations}
\begin{eqnarray}
\label{eq:si}
\lambda_X^u[k]=+\frac{|\widehat{\delta}_x|}{\widehat{\sigma}_x^2[k]}\left(x[k]-\widehat{\mu}_{0,x}[k]-\frac{|\widehat{\delta}_x|}{2}\right)\\
\lambda_X^d[k]=-\frac{|\widehat{\delta}_x|}{\widehat{\sigma}_x^2[k]}\left(x[k]-\widehat{\mu}_{0,x}[k]+\frac{|\widehat{\delta}_x|}{2}\right)
\label{eq:sd}
\end{eqnarray}
\end{subequations}
where the superscripts $u$ and  $d$ represent the variables corresponding to the ``upward'' and ``downward'' change detection respectively. $\widehat{\delta}_x$ is the mean change estimate, and is initialized based on \textit{``a priori''} knowledge. $\widehat{\sigma}_x^2[k]$ is the random process variance estimate that is assumed to remain constant during the change and $\widehat{\mu}_{0,x}[k]$ is the mean estimate. The mean estimate is obtained adaptively from normal ensembles using the exponential window as follows:
\begin{align}
\label{eq:mu}
\widehat{\mu}_{0,x}[k]=&\rho~\widehat{\mu}_{0,x}[k-1]+(1-\rho)x[k]
\end{align}
where $0 \leq \rho \leq 1$ determines the dependency of the mean estimator on the past samples compared to the current sample respectively. Accordingly, two cumulative sums, ${M}_X^{u}[k]$ and ${M}_X^{d}[k]$, and two decision functions, ${G}_X^{u}[k]$ and ${G}_X^{d}[k]$, are derived as follows.
\begin{subequations}
\begin{align}
{M}_X^{u}[k]=&{M}_X^{u}[k-1]+{\lambda}_X^{u}[k] 
\label{eq:Si}
\\
{M}_X^{d}[k]=&{M}_X^{d}[k-1]+{\lambda}_X^{d}[k]
\label{eq:Sd}
\\
{G}_X^{u}[k]=&\max ({G}_X^{u}[k-1]+{\lambda}_X^{u}[k],{0}) 
\label{eq:Gi}
\\
{G}_X^{d}[k]=&\max ({G}_X^{d}[k-1]+{\lambda}_X^{d}[k],{0})
\label{eq:Gd} 
\end{align}
\end{subequations} 
The decision functions are then compared to a user-defined threshold, ${\alpha}_x$. Then, hypothesis $\mathcal{H}_1$ for the upward or downward change in the mean is chosen if either ${G}_X^u[k] > \alpha_x$ or $ {G}_X^d[k] > \alpha_x$, respectively. Depending on the change direction, the estimate of the anomaly start time is:
\begin{align}
\widehat{k}_{c,x}=\underset{{k}_{0,x} \leq k_{c,x} \leq k-1}{\text{argmin }}{M}_X^{u/d}[k_{c,x}]
\end{align}
where ${k}_{0,x}$ is the last detected change time index. 

During an event, we expect to see multiple change points. Detection of multiple changes is done by resetting the decision functions and cumulative sums to zero after the change is detected, and continuing the dynamic rule for upcoming samples. The fast change anomaly is completed if no new changes are detected for a defined window of time.
\section{Numerical Results}
\label{sec:res}
\subsection{Field Recorded Data}
Field recorded data is obtained from $\mup$s that are installed in our partner utility medium voltage (12.47 kV) grid. The window of data that we use to validate our detection rules contains a voltage sag event recorded on the network. The captured three phase voltage phasor magnitude by one of the $\mup$s during this event is shown in Fig.~\ref{fig:vol_sag}.
\begin{figure}[h]
\includegraphics[width=0.5\textwidth]{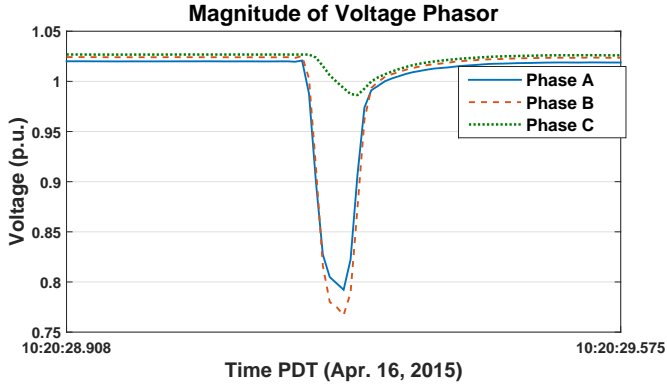}
\caption{Three Phase Voltage Phasor Magnitude Captured by a $\mup$ During the Voltage Sag}
\label{fig:vol_sag}
\end{figure} 

The $\mup$s used in this use case are installed in pairs at opposite ends of a line. We first show, for a specific line the change detection using the optimum cost function defined in \eqref{eq:cost_two_upmu} with two $\mup$s data, and then demonstrate that it would be possible to detect the changes with a single $\mup$ using \eqref{eq:cost_one_upmu}. The size of the null space using two and one $\mup$ during the voltage sag event is illustrated in Fig.~\ref{fig:cost_one_two_real}(a) and Fig.~\ref{fig:cost_one_two_real}(b) with $M=32$ samples, respectively \footnote{The voltage and current phasors are first converted to per-unit system assuming $S_b=1$ MVA.}. It can be observed that our metrics are able to effectively detect the anomaly as changes in the null-space, even when there is only one $\mup$ installed. The red markers are pointing to the detected anomaly start time, $\hat{k}_{c}$. Although the behavior in double and single $\mup$ metric is very similar, the change in the double-$\mup$ case is more pronounceable (noticing the scale on the vertical axis), obviously because it is augmenting the effects of the voltage sag on two $\mup$s rather than one.   
\begin{figure}[h]
\includegraphics[width=0.5\textwidth]{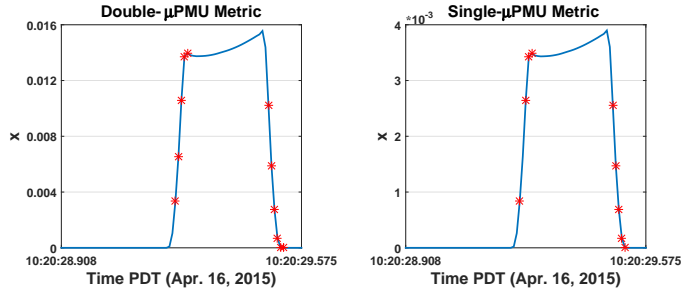}
\caption{Double and Single $\mup$ Metric Using Real Data for a Sag Event}
\label{fig:cost_one_two_real}
\end{figure}
It should be noted that since the value of the optimum cost function at time $k$ depends on the last $M$ phasor samples, there is a delay in the appearing and disappearing of the event in the cost functions.
\subsection{Simulated Data}
The IEEE 34-bus system is simulated in this section to test the rules on simulated data. The single-line diagram is shown in Fig.~\ref{fig:ieee34}, where the bus numbers are restarted from 1, compared to original feeder, for simplicity. The test feeder data can be found at \cite{ieee34}.
\begin{figure}[h]
\includegraphics[width=0.5\textwidth]{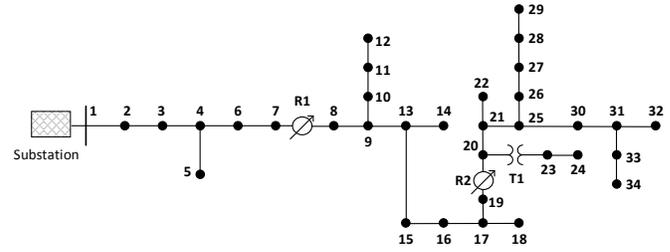}
\caption{IEEE 34-bus Test Feeder Single-Line Diagram}
\label{fig:ieee34}
\end{figure}
The simulation is performed in the time-series simulation environment of DIgSILENT \cite{manual2009version} and the time-domain waveforms are forwarded to our simulated $\mup$ model to obtain the phasor representation. We used the two-cycle P class filter in C.37.118 standard \cite{c37} in our simulated $\mup$ since we did not have access to the proprietary filters implemented within the $\mup$. This allows us to more closely mirror what a $\mup$ actually outputs in comparison to simulations that use FFT to extract the phasors. We consider a SLG fault scenario to assess how our rules perform. We assume that three $\mup$s, i.e., $K=3$, are available and are placed at bus 9, 19, and 31. The $\mup$ placement is done aiming to achieve the maximum change in \eqref{xopt_underdet} during a transient. The detailed formulation of the placement problem will be given in our future work. 

A temporary SLG fault at $50\%$ on phase A of line $(16,17)$ occurs at $t=0.5$ sec and is cleared at $t=0.52$ sec, before the recloser opens. Fig.~\ref{fig:cost_one_scen1} shows the single $\mup$ metric derived in \eqref{eq:cost_one_upmu} using the current on lines $(9,13)$, $(19,20)$ and $(31,32)$ with $M=6$ \footnote{The voltage and current phasors are first converted to per-unit system assuming $S_b=1$ MVA.}. The detected start time of the changes are marked for these three $\mup$s during the event using the same detection threshold and ``a priori'' knowledge about the magnitude of the change. As it can be observed, the number of changes found in each line's metric is correlated with how severely the event affects that particular $\mup$ measurement. The metric can be formed for all the incident lines to the buses with $\mup$s, however, we just show the cost relating to the aforementioned lines. 
\begin{figure}[h]
\includegraphics[width=0.5\textwidth]{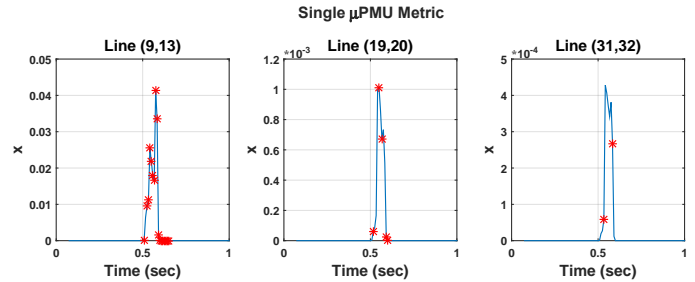}
\caption{Single $\mup$ Metric Using Simulated Data for SLG Fault}
\label{fig:cost_one_scen1}
\end{figure}
Clearly, in each case $x[k]$, calculated for each line, experiences a sudden change due to the fault event being sufficiently severe to shift all of the lines into a transient state. It is important to note that since phasor calculation is based on the past two-cycles of time-domain data, and since the cost function is affected by the $M-1$ previous phasor samples, as also the case for the real data, the anomaly appears and disappears with a delay in the cost function. To reduce the delay, the phasor can be estimated with less samples, e.g., one cycle instead of two cycles, but the associated accuracy decreases. In addition, $M$ can be set to the minimum possible value, which was the case here. The metric across multiple $\mup$s in \eqref{xopt_underdet} is also tested for this scenario and the corresponding metric, as well as the start time of the detected changes, are shown in Fig.~\ref{fig:rev_cost_cent_SLG}. As we expected, the fault manifests itself as a sudden change across the $\mup$s; a signature of an anomaly.
\begin{figure}[h]
\includegraphics[width=0.5\textwidth]{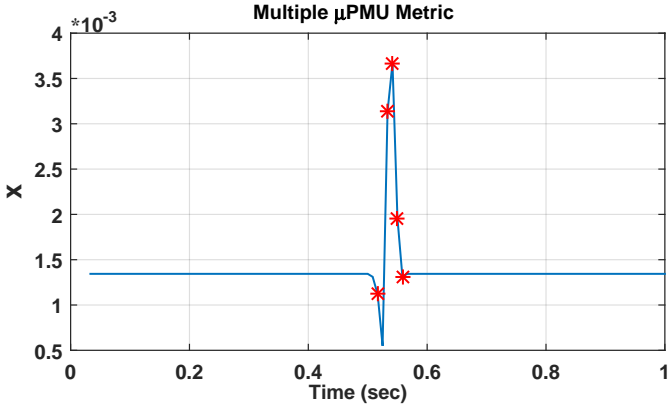}
\caption{Detected Changes in the Metric Across Multiple $\mup$s for Single-Line to Ground Fault}
\label{fig:rev_cost_cent_SLG}
\end{figure}
The delay here is solely due to the two-cycle calculation of the phasor, therefore less in comparison to the single $\mup$ metric.

\textit{Metrics Sensitivity to $\mup$ Data Manipulation:} An advantage of our metric across multiple $\mup$s is that it is robust to some degree to the $\mup$ data manipulation. In fact, as long as some of $\mup$ data are not compromised, the metric reveals the presence of a transient. What matters here is whether the change is sufficiently severe to trigger the change detector. If the detector is set to be too sensitive, the ``false alarms'' increase, so there is an associated trade-off that should be considered when the detector is designed. However, this study is beyond the scope of this paper. 

To test the performance of our metric in this situation, we consider again the single-line to ground fault previously explained for three cases where the attacker manipulates the data samples of the $\mup$ at bus 9 for the first case, the $\mup$ at bus 19 in the second case, and finally $\mup$s at buses 9 and bus 19 together for the third case. The attacker manipulates the data during the fault by pointing to the last sample before the fault starts. Fixing the detector parameters for all the three cases, Fig.~\ref{fig:rev_SLG_data_manipulate} shows the detected changes in the metric across multiple $\mup$s.
\begin{figure}[h]
\includegraphics[width=0.5\textwidth]{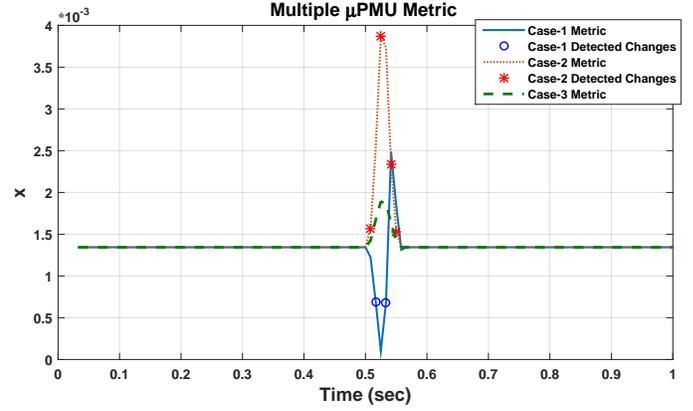}
\caption{Detected Changes in the Metric Across Multiple $\mup$s for Single-Line to Ground Fault under Manipulated $\mup$ Data}
\label{fig:rev_SLG_data_manipulate}
\end{figure}
It can be observed that the metric shows sudden changes in all the three cases. For the first and the second case, a few changes can still be detected by our CUSUSM detector. However, when both $\mup$ 9 and 19 are manipulated (case 3), the detector fails to spot the transient.

It is also clear that our single $\mup$ metric can only flag an anomaly if the corresponding $\mup$ data is not compromised. However, it should be mentioned that injecting false data at the device level is not an easy task for the attacker since the $\mup$ devices are designed to be read-only. In fact, man-in-the-middle attacks are more likely, which in this case, our single $\mup$ rule that is checked next to each device can detect the anomaly, even when the multiple $\mup$ rule is compromised. 

\textit{Metrics Sensitivity to Grid Connectivity Manipulation:}
The attacker can falsify the grid connectivity data to mislead our intrusion detector. Our single $\mup$ rule is set up to be agnostic about the grid interconnection, and therefore is not affected by such an attack. However, the grid connectivity becomes important for our rule using all the $\mup$ data.

\textit{Case--1:} We assume that grid connectivity data is manipulated, indicating that line $(25, 26)$ is out of service, while it is in service. The reason for choosing line $(25, 26)$ in our scenario is that no $\mup$ is installed on the lateral from bus 25 to 29 so we cannot confirm the line outage by just looking at the magnitude of the current phasor of that $\mup$. However, if a line outage is the ground truth, we expect to see a sudden change in our metrics because of the transient induced by the line switching before the step variation due to the change of topology. This expected transient is not observable in our metric, as can be seen in Fig.~\ref{fig:rev_Y_manip_c1}, and therefore this indication of a line outage can be flagged as an intrusion. 
\begin{figure}[h]
\includegraphics[width=8.0cm, height=4.5 cm]{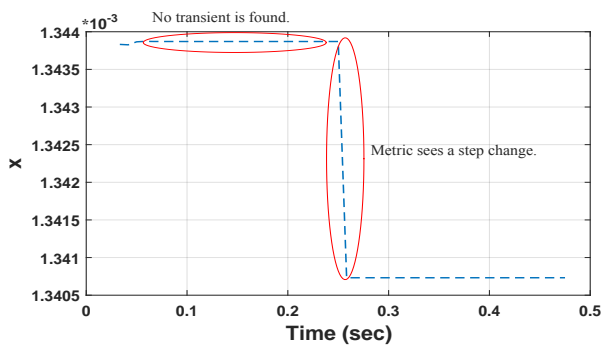}
\caption{Multiple $\mup$ Metric under Grid Topology Data Attack-Case 1}
\label{fig:rev_Y_manip_c1}
\end{figure} 

\textit{Case--2:} In this case, we assume that a three-phase fault occurs on line $(25, 26)$ at $t=0.4s$, resulting in the outage of the line due to the fuse operation at $t=0.46s$. Therefore the lateral $25-29$ is deenergized. The attacker falsifies the data, indicating that the line $(25, 26)$ is still in service after the transient finishes. Fig.~\ref{fig:rev_Y_manip_c2} shows the correct and the compromised metric.
An attack is undetectable by our multiple $\mup$ metric if the attacker is able to compromise some data, while not changing the value of the metric significantly before and after the fault. In our case, as we can observe in Fig.~\ref{fig:rev_Y_manip_c2}, the value of the compromised metric (post-fault value) is close to the ground truth (pre-fault value), and therefore this rule is not able to detect the associated data manipulation in this case.     
\begin{figure}[h]
\includegraphics[width=0.5\textwidth]{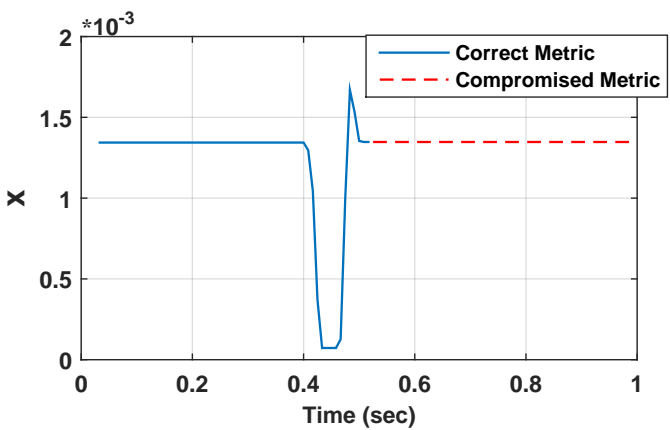}
\caption{Multiple $\mup$ Metric under Grid Topology Data Attack-Case 2}
\label{fig:rev_Y_manip_c2}
\end{figure}                 

\section{Conclusion}
\label{sec:conc}
In this paper, we have formulated two types of metrics for anomaly detection using $\mup$ measurements for the distribution grid. The first metric requires phasor data of a single device and is agnostic about the grid model. The second metric correlates data across multiple $\mup$s to flag events. Anomalies are detected applying the sequential CUSUM algorithm in search for a sudden change in their mean value. To be more realistic, the rules have been formulated recognizing that the grid is in the quasi-steady state during the normal operation rather than operating at nominal frequency. In addition, the distribution grid has been modeled allowing for  unbalanced load, absent of the assumption of transposed lines and considering the fact that there may be two-phases or single phase laterals in the grid. The proposed rules have been verified using the real and synthetic data.

For future work, we will extend our rules to robustify the intrusion detector. For example, case-2 in the grid connectivity data manipulation can be identified as anomaly by checking the post and pre-fault power flow measured by the $\mup$ upstream of the event location. We will also present the optimal placement of the $\mup$ sensors with respect to our multiple $\mup$ metric to achieve the maximum change in the metric when an anomaly happens, as well as methods to localize the events that caused the transient. We will also propose a formal architecture for the implementation of our anomaly detectors in the context of cyber-physical security, where the analysis results of the $\mup$ data will be tied with the monitored DSCADA traffic for intrusion detection.       
\section*{Acknowledgments}
This research was supported in part by the Director, Office
of Electricity Delivery and Energy Reliability, Cybersecurity
for Energy Delivery Systems program, of the U.S. Department
of Energy, under contracts DE-AC02-05CH11231 and DE-OE0000780. Any opinions in this material are those of the
authors and do not necessarily reflect those of the sponsors.
\bibliography{bib}
\vspace{-0.1cm}
\bibliographystyle{IEEEtran}
\end{document}